\newcounter{Lcount}
\newcommand{\numsquishlist}{
   \begin{list}{\arabic{Lcount}. }
    { \usecounter{Lcount}
 \setlength{\itemsep}{-.1ex}      \setlength{\parsep}{0ex}
      \setlength{\topsep}{0ex}       \setlength{\partopsep}{0ex}
      \setlength{\leftmargin}{1em} \setlength{\labelwidth}{1em}
      \setlength{\labelsep}{0.1em} } }
\newcommand{\numsquishend}{\end{list}}
\newcommand{\squishlist}{
   \begin{list}{$\bullet$}
    { \setlength{\itemsep}{-.1ex}      \setlength{\parsep}{0ex}
      \setlength{\topsep}{0ex}       \setlength{\partopsep}{0ex}
      \setlength{\leftmargin}{.8em} \setlength{\labelwidth}{1em}
      \setlength{\labelsep}{0.5em} } }
\newcommand{\squishend}{\end{list}}
\definecolor{Gray}{gray}{0.85}
\definecolor{LightGray}{rgb}{0.9,0.9,0.9}
\definecolor{LightBlue}{rgb}{0.8,0.8,1}
\newcounter{problem}
\newenvironment{problem}[1][htb]
  {
   \begin{algorithm2e}[#1]%
  }{\end{algorithm2e}}
\newcommand{\flip}{{\sc Faction Initiator Inference Problem}\xspace}%
\begin{document}

\title{Framework for Inferring Leadership Dynamics of  Complex Movement  from Time Series}

\author{Chainarong Amornbunchornvej\thanks{Department of Computer Science, University of Illinois at Chicago. \{camorn2,tanyabw\}@uic.edu}
 \and Tanya Berger-Wolf\footnotemark[1]  \\
}
\date{}

\maketitle



\fancyfoot[R]{\footnotesize{\textbf{Copyright \textcopyright\ 2018 by SIAM\\
Unauthorized reproduction of this article is prohibited}}}





\begin{abstract} \small\baselineskip=9pt Leadership plays a key role in social animals, including humans, decision-making and coalescence in coordinated activities such as hunting, migration, sport, diplomatic negotiation etc. In these coordinated activities, leadership is a process that organizes interactions among members to make a group achieve collective goals. Understanding initiation of coordinated activities allows scientists to gain more insight into social species’ behaviors. However, by using only time series of activities data, inferring leadership as manifested by the initiation of coordinated activities faces many challenging issues. First, coordinated activities are dynamic and are changing over time. Second, several different coordinated activities might occur simultaneously among subgroups. Third, there is no fundamental concept to describe these activities computationally. In this paper, we formalize \flip and propose a leadership inference framework as a solution of this problem. The framework makes no assumption about the characteristics of a leader or the parameters of the coordination process.
The framework performs better than our non-trivial baseline in both simulated and biological datasets (schools of fish). Moreover, we demonstrate the application of our framework as a tool to study group merging and splitting dynamics on another biological dataset of trajectories of wild baboons.
In addition, our problem formalization and framework enable opportunities for scientists to analyze coordinated activities and generate scientific hypotheses about collective behaviors that can be tested statistically and in the field.

 \hspace{0mm}{\bf Keywords}: leadership, coordination, time series, influence, collective behavior
\end{abstract}

\section{Introduction}
Leadership is a process of individuals (leaders) who influence a group to achieve collective goals~\cite{hogg2001social,glowacki2015leadership}.  Leadership plays a key role in solving collective-action problems (e.g. social conflicts, migration, hunting, territorial defense) across social species~\cite{glowacki2015leadership},  organizing collective movement~\cite{couzin2005effective}, as well as collaboration in group's decision making~\cite{Dyer:2009aa,glowacki2015leadership}. In the context of coordination, which is defined as an emergence of collective actions to achieve the collective goals~\cite{malone1994interdisciplinary}, leadership mainly contributes by fostering collective behaviors in social species ranging from humans ~\cite{Dyer:2009aa,hogg2001social,glowacki2015leadership} to fish~\cite{krause2000leadership}. 

In nature, leadership can be viewed as a process of initiation of coordinated activity. For example, leadership is a process by which leaders initiate the group's coordinated movement toward a destination~\cite{krause2000leadership,Smith2015187,stueckle2008follow}. In this process, leaders guide their group's members to follow in the right direction. Understanding how leaders emerge and influence collective behaviors enables scientists to gain insight into synchronization and coordination processes in nature. In this paper, we use the words {\bf `leader'} and {\bf `initiator'} interchangeably.

While many studies on leadership in coordinated activity exist in behavioral research, there are a few computational approaches addressing the leadership of coordination. In social network analysis, Influence Maximization (IM)~\cite{kempe2003maximizing, goyal2010learning,He2016maximizing} is one of the classic problems that focuses on inferring a subset of individuals that maximizes information spreading. However, IM focuses solely on finding potential initiators who initiate the coordination of {\em information spreading} and, moreover, does not address the question of {\em when} coordination happens. The method for inferring leaders from online communities actions~\cite{goyal2008discovering} can be used to identify the {\em group} being coordinated but it, still, does not provide the information on {\em when} coordinated activities happen. In movement coordination, ~\cite{andersson2008reporting, kjargaard2013time, PhamICDE2016, carmi2013inferring, jacoby2016inferring} propose methods specific to movement activity for finding leaders {\em during} group's  movement intervals but none of them can be used to identify the time of the process of coordination. There also exist many works regarding collective behavior and implicit leaders~\cite{Wu2014crowdmodel,couzin2005effective,Yu:2010:CDM:1838186.1838192}. In this model, leaders can influence their group implicitly and leaders' identity might be unknown to the group.  Still, none of the works in this category can be used to infer the time of the periods of coordination. 

Since leadership is a collective process~\cite{hogg2001social}, considering only dyadic interactions is not enough to infer a leadership instance. Therefore, the works in \cite{flica,kjargaard2013time,jacoby2016inferring} proposed leadership frameworks that are based on a network representation of time series.

In the context of coordination leadership, the method of leadership inference in~\cite{flica} provides a solution for  identifying coordination events,  the initiators of these events, as well as  proposes an approach for the classification of the types of leadership models acting on a group. However, the framework in~\cite{flica} cannot be used to infer {\em multiple} coordinated activities which can occur simultaneously because the notion of multiple factions is not employed by the framework.  We aim to close these gaps in the study of coordination leadership.

\subsection{Our Contributions.}
First, we introduce the novel computational problem of leadership identification in multiple coordinated activities, namely \flip. We formalize the problem and analyze its theoretical properties and implications. Second, we propose a framework for \flip by combining several existing methods in a principled and novel manner. Our framework is capable of:
\squishlist
\item {\bf Detecting intervals of multiple coordination:} inferring intervals when different coordinated activity in one or more groups may occur simultaneously; 
\item {\bf Identifying leaders:} identifying the initiators of these coordinated activities, the individual who initiates each coordination and the group that follows; 
\item {\bf Discovering the events of merging and splitting of coordination:} identifying the time when a coordinated group is separated into smaller sub-groups or merged with another coordinated group.
\squishend

\begin{framed}  
\noindent {\flip:} { To reach collective goals, group's members must coordinate with each other. Multiple factions within a big group may exist solving their sub-tasks in helping the entire group achieve the collective goals. {\bf Given time series of individual activities, our goal is to identify periods of coordination and the subsequent coordinated activity, find factions of coordination if more than one exist, as well as identify leaders of each faction}}
\end{framed}

We demonstrate the ability of the framework to infer leadership in multiple coordinated groups on both simulated and biological datasets. Since we propose the new problem and framework and no other approaches exist, we compare our framework against a non-trivial baseline, which is the modification of the closest existing approach in leadership inference.  Our approach is flexibly generalizable to any multiple coordinated activities from any time series data. 

\subsection{Influence Maximization vs. \flip}
Influence Maximization can be viewed a special case of the \flip, namely a single event of coordinating the state of information in a social network, using a specific coordination (spreading) mechanism. 
  \squishlist 
  \item {\bf Coordination Mechanism:} Majority of Influence Maximization work uses Independent Cascade and Linear Threshold models as main coordination mechanisms. Yet, there are other models, such as Hierarchy, Dictatorship or other non-network based models that can be represented as coordination mechanisms. The new problem we formalize in this paper, \flip, generalizes to all mechanisms for coordinating group activities and we demonstrate so in this paper by using datasets generated by several models of coordination mechanisms.  
  \item {\bf Coordination Event:} Influence Maximization focuses mainly on an information spreading event happening in a social network. The information state for each node are the time series being coordinated. However, this is one particular type of a coordination event and other, more general and non-network, coordination activities are possible. For example, a coordinated movement activity of animals is a coordination event that has animals coordinating their trajectories, not necessarily through a wave-like spread of information in a network, to reach a group destination.  Our proposed framework can handle all types of coordination events, including but not limited to network information spreading.  
  \item {\bf The dynamics of coordination:} In influence Maximization, majority of papers focus on inferring a single global set of initiators that can maximize influence in a given network. However, in a single dataset, there can be many coordination events and each event can have different initiators. Moreover, coordination events with different initiators might happen simultaneously. The framework we propose here aims to address the dynamics of coordination from data and is capable of inferring when each coordination event happens and who are the initiators.     
\squishend
\newcommand{\argmax}{\mathop{\mathrm{argmax}}\limits} 

\newtheorem{definition}{Definition}

\section{Problem statement and analysis}
\subsection{Coordination without noise.}
Given a collection of time series, our goal is to find multiple coordination intervals as well as their initiators. We do not assume that the coordination intervals that belong to different coordinated sets of time series are disjoint and allow overlap. We formalize various concepts of coordination and following similar to~\cite{flica}.

\begin{definition}[Following relation]
\label{followRDef}
Let $U = (\vec{u}_1,\dots,\vec{u}_t,\dots)$ and $W = (\vec{w}_1,\dots,\vec{w}_t,\dots)$ be arbitrary-length time series. If $\forall t \in \mathbb{N}$, there exists a time delay $\Delta t \in \mathbb{Z^+} \cup\{0\}$, such that $\vec{w}_t=\vec{u}_{t+\Delta t}$, then $U$ follows $W$, denoted as $W \preceq U$ for any $\Delta t$ and $W \prec U$ if $\Delta t>0$. 
\end{definition}


\begin{definition}[Coordination]	
\label{CoorDef}
Given a set of $m$-dimensional time series $\mathcal{U} = \{U_1, \dots, U_n\}$. The set $\mathcal{U}$ is {\em coordinated} at time $t$ if for every ${n \choose 2}$ pairs $U_i, U_j \in \mathcal{U}$, there exists either $U_i \prec U_j$ or $U_j \prec U_i$. 
The {\em coordination interval} is the maximal contiguous time interval $[t_1, t_2]$ such that $\mathcal{U}$ is coordinated for every $t \in [t_1, t_2]$.
\end{definition}

\begin{definition}[Initiator] 
\label{InitiatorDef}
Let $\mathcal{U} = \{U_1, \dots, U_n\}$ be a set of coordinated $m$-dimensional time series within some coordination interval $[t_1, t_2]$. Then the time series $L \in \mathcal{U}$ is the {\em initiator}  time series for the coordination interval if for each time series $U \in \mathcal{U} \setminus \{L\}$, $L \prec U$.  
\end{definition}

\begin{definition}[Following network]
\label{followNetDef}
Let $\mathcal{U} = \{U_1, \dots, U_n\}$ be a set of time series, a directed graph $G = (V,E)$ is defined as a following network, where $V$ is a set of nodes that has a one-to-one correspondence to the time series set $\mathcal{U}$ and $E$ is a set of edges, such that $e_{i,j} \in E$ if $U_j \prec U_i$.
\end{definition}


We now extend these concepts to the case of multiple coordinated subgroups.

\begin{definition}[Faction] 
Given a set of time series $\mathcal{U}$, a subset $F \subseteq \mathcal{U}$ at time $t$ is maximally coordinated, if  $F$ is coordinated and there is no other coordinated set $F' \subseteq \mathcal{U}$ where $F \subset F'$. We call such maximally coordinated $F$ a faction at time $t$. 
\label{FactionDef}
\end{definition}

\begin{definition}[Faction interval] 
The {\em coordination interval} of a faction $F$ or a faction interval is the maximal consecutive time interval $[t_1, t_2]$ such that $F$ is coordinated for every $t \in [t_1, t_2]$.
\label{FactionInvDef}
\end{definition}

Faction is a structurally maximal subset and its interval is a temporally maximal subset.

\begin{lemma}
\label{FracMemLem}
A time series $W$ is a member of a faction $F$ if and only if it has an edge to $F$'s initiator $L$. 
\end{lemma}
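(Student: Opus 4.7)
The plan is to unpack the definitions and exploit the fact that every non-initiator element of $F$ is a time-delayed copy of $L$, so that the common ancestor $L$ induces a linear order among followers by their delays. By Definition~\ref{followNetDef}, the statement ``$W$ has an edge to $L$'' unravels to the relation $L \prec W$, i.e., there is an integer delay $\Delta t_W > 0$ with $W_t = L_{t + \Delta t_W}$ for every $t$ in the faction interval. The forward direction is then almost a restatement: if $W \in F$ and $W \neq L$, Definition~\ref{InitiatorDef} immediately gives $L \prec W$, hence the edge $e_{W,L}$ exists by Definition~\ref{followNetDef}.

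For the backward direction, the idea is to show that $F \cup \{W\}$ is itself coordinated, and then invoke the structural maximality of $F$ from Definition~\ref{FactionDef} to force $W \in F$. To verify coordination of $F \cup \{W\}$, I would take an arbitrary $U \in F$ and check that either $U \prec W$ or $W \prec U$. When $U = L$, the conclusion is exactly the hypothesis. When $U \neq L$, Definition~\ref{InitiatorDef} gives $L \prec U$ with some delay $\Delta t_U > 0$, so $U_t = L_{t + \Delta t_U}$. Chaining this with $W_t = L_{t + \Delta t_W}$ yields $W_t = U_{t + (\Delta t_W - \Delta t_U)}$, so Definition~\ref{followRDef} produces $U \prec W$ when $\Delta t_W > \Delta t_U$ and $W \prec U$ when $\Delta t_W < \Delta t_U$. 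Every other pair in $F$ is already comparable because $F$ was coordinated to begin with, so $F \cup \{W\}$ meets Definition~\ref{CoorDef}, and then maximality of $F$ yields $W \in F$.

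The main obstacle is the degenerate case $\Delta t_W = \Delta t_U$: the strict positivity demanded by $\prec$ in Definition~\ref{followRDef} blocks asserting either direction, so $F \cup \{W\}$ fails to satisfy Definition~\ref{CoorDef} unless $W$ and $U$ are identified. I would dispatch this either by appealing to the implicit assumption that the members of $\mathcal{U}$ are distinct as functions (so equal delays from $L$ collapse $W$ and $U$ into the same series, and $W \in F$ trivially) or by a tie-breaking convention on delays. With that subtlety resolved, the rest of the argument is a routine substitution through the common initiator plus one use of the maximality clause.
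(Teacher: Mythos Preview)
Your approach matches the paper's exactly: the forward direction is the one-line observation that $W\in F$ and $W\neq L$ give $L\prec W$ by Definition~\ref{InitiatorDef}, hence an edge $e_{W,L}$; the backward direction argues that $F\cup\{W\}$ is coordinated and then invokes the maximality clause of Definition~\ref{FactionDef}.

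The only difference is one of rigor. The paper's backward argument is a single sentence: ``we can add $W$ to $F$, which will remain a coordinated set but will now violate the maximality of $F$.'' It simply asserts that $F\cup\{W\}$ is coordinated and does not verify comparability of $W$ with the non-initiator members of $F$. You actually do this verification via the delay-chaining argument $W_t=L_{t+\Delta t_W}$, $U_t=L_{t+\Delta t_U}$, and you flag the $\Delta t_W=\Delta t_U$ tie as a genuine edge case that requires a distinctness assumption on the time series. That observation is correct and is a detail the paper glosses over; your resolution (identifying equal-delay followers of $L$ as the same series) is the natural one.
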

\begin{proof}
Let a time series $W \in F$. Since $\forall U \in F\setminus \{L\}$, $L \prec W$. By definition, there is an edge from $W$ to $L$.\\

Let $L\prec W$.  If $W$ is not in $F$, then we can add $W$ to $F$, which will remain a coordinated set but will now violate the maximality of $F$. Thus, $W\in F$.
\end{proof}

According to Lemma~\ref{FracMemLem}, a faction $F$ is a set of nodes within $G$ such that all nodes within $F$ have a directed edge to $L$. Note that $L$ always has the out-degree of zero and in-degree of $|F|-1$ within a coordination interval.

\begin{problem}
    \SetKwInOut{Input}{Input}
    \SetKwInOut{Output}{Output}
    \Input{Set  $\mathcal{U} = \{U_1,\dots, U_n\}$ of $m$-dimensional time series}
    \Output{ A set of factions $\mathcal{F}=\{F_1,\dots,F_k\}$, a set of coordinated intervals $\mathcal{T}=\{[t^1_1,t^1_2],...,[t^k_1,t^k_2]\}$, and the set of initiator time series $\mathcal{L}=\{L_1,...L_k\}$ where $L_i$ initiated the coordination interval $[t^i_1,t^i_2]$ of the faction $F_i$}
    \caption{{\flip}}
    \label{FLIPProb}
\end{problem}

We are now ready to formally state the \flip at Problem~\ref{FLIPProb}.

\subsection{Coordination with noise.}
In the previous section, we stated the definitions and properties of the problem of identifying multiple faction initiators  in the ideal setting. In this section, we provide the relaxation and the analysis of the problem in the presence of noise. 

\begin{definition}[$\sigma$-following relation] 
Let $\mathcal{U}$ be a set of time series  and $\mathrm{sim}: \mathcal{U}\times \mathcal{U} \to [0,1]$ be some similarity measure between two time series. For any pair of time series $U_i,U_j \in \mathcal{U}$, let $\Delta t_{max} =\mathrm{min}\argmax_{\Delta t \in \mathbb{Z}} \mathrm{sim}(U_{i,1},U_{j,1+\Delta t})$ where $U_{i,t}$ represents a time series $U_i$ starting at time $t$, and let $\mathrm{sim}_{max}(U_i,U_j)=\mathrm{sim}(U_{i,1},U_{j,1+\Delta t_{max}})$. Then, for a threshold $\sigma \in (0,1]$, if $ \mathrm{sim}_{max}(U_i,U_j) \geq \sigma$, then we have: 
\squishlist 
  \item if $\Delta t_{max} >0$ , then ${U_i \prec_{\sigma} U_j}$,
  \item if $\Delta t_{max} <0$ , then ${U_j \prec_{\sigma} U_i}$,
	\item if either $\Delta t_{max} = 0$ or ${U_i \prec_{\sigma} U_j}$ and ${U_j \prec_{\sigma} U_i}$, then ${U_i \equiv_{\sigma} U_j}$ ($U_i$ is $\sigma$-following equivalent to $U_j$).
\squishend
\label{SigmaFollwDef}
\end{definition}

Note that if two time series $U$ and $W$ such that ${U \prec_{\sigma} W}$ and ${W \prec_{\sigma} U}$,  there exists more than one position in time $\Delta t_{max}$ that make both time series maximize their similarity. 

\begin{definition}[$\sigma$-coordination]
Let $\mathcal{U}$ be a set of time series, then $\mathcal{U}$ is $\sigma$-coordinated if for every ${|\mathcal{U}| \choose 2}$ pairs $U_i, U_j \in \mathcal{U}$, either ${U_i \prec_{\sigma} U_j}$ or ${U_j \prec_{\sigma} U_i}$ exists. 
\end{definition}

\begin{definition}[$\sigma$-faction]
Let $\mathcal{U}$ be a set of time series. A $\sigma$-faction $F \subseteq \mathcal{U}$ is a maximal set such that $F$ is $\sigma$-coordinated, and there is no other $\sigma$-coordinated set $F' \subseteq \mathcal{U}$ where $F \subset F'$.
\label{def:sigmafactions}
\end{definition}  


\begin{definition}[Relaxed faction interval]
\label{def:rexfacinterval}
Let $\mathcal{U}$ be a set of time series, the time interval $[t_1,t_2]$ is a faction interval of initiator $L$ if for all $t \in [t_1,t_2]$, there exists a faction $F_t$ such that $F_t$ has $L$ as its initiator and $|F_t| >1$.
\end{definition}

\subsection{Coordination measure}
Given a set of time series $\mathcal{U}$, a set of clusters $\mathcal{C}=\{H_1,\dots,H_n\}$ such that $\bigcup_k H_k=\mathcal{U}$,  we define a cluster membership indicator $\delta_{i,j} =1$ if time series $U_i$ and $U_j$ belong to the similar cluster, otherwise it is zero.  {\bf The average coordination measure} $\Psi$ of a set of clusters $\mathcal{C}$ is defined as follows:

\begin{equation}
\label{eq:coor-measure}
\Psi(\mathcal{C}) = \frac{\displaystyle\sum_{ U_i,U_j \in \mathcal{U},U_i \neq U_j }\mathrm{sim}_{max}(U_i,U_j)\delta_{i,j}}{\displaystyle\sum_{ U_i,U_j \in \mathcal{U},U_i \neq U_j }\delta_{i,j}}.
\end{equation}

Note that $\Psi \in [0,1]$. If $\Psi$ is close to 1, then all time series within the same cluster are highly similar, with some time delay. This implies a high degree of coordination within each cluster in this case. On the contrary, $\Psi \approx 0$ implies no coordination, on average.

\begin{theorem}
Given a set of time series $\mathcal{U}$ containing a set of $\sigma$-faction $\mathcal{F}=\{F_1,\dots,F_n\}$ where $\bigcup_{F_i \in \mathcal{F}}F_i=\mathcal{U}$, then, for all possible sets of clusters, $\mathcal{F}$ maximizes the average coordination measure $\Psi$. 
\end{theorem} 

Proof of Theorem~\ref{eq:coor-measure} is in the supplementary material.

\section{Methods}
\label{sec:method}

\begin{figure}[ht!]
\centering
\includegraphics[width=1\columnwidth]{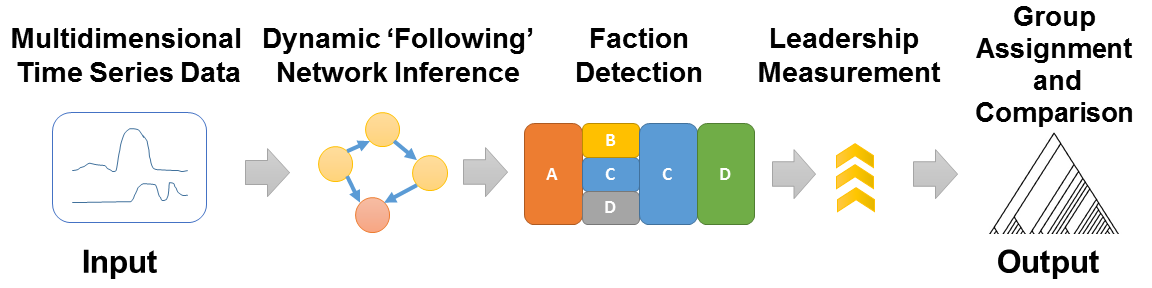}
\caption{A high-level overview of mFLICA framework}
\label{fig:processingSteps}
\end{figure}

We propose the following framework to solve \flip. The framework is designed to infer a set of factions, faction intervals, and their initiators from time series. Figure~\ref{fig:processingSteps} depicts the overview of our framework.


\subsection{Following network inference.}
\label{subsec:following-network-inference}

Given a set of time series $\mathcal{Q}$ and a similarity threshold $\sigma$, for each pair of time series $U,W \in \mathcal{Q}$, our goal is to measure whether either $U$ follows $W$ or no following relation between them exists. The time series similarity measure we need should satisfy the following properties. First, it should recognize common patterns between two time series if they exist. These common patterns can be noisy, distorted, time-delayed, and discontinuous.  Second, it should infer time delay between these common patterns.  

We deploy Dynamic Time Warping (DTW) \cite{sakoe1978dynamic} as the similarity measure of following relation since DTW's warping path can distinguish whether two time series share  noisy common patterns and can approximately infer the time delay of common patterns between time series. Besides, according to the work in \cite{kjargaard2013time}, DTW performance is superior to that of other methods in detecting following among time series. 

For any pair of time series $U,W \in \mathcal{Q}$, we use the equation from~\cite{flica}, to approximate a following relation as below:


\begin{equation}
	\mathrm{s}(P_{U,W})=\frac{\sum_{(i,j) \in P_{U,W}}\mathrm{sign}(j-i)}{|P_{U,W}|},
	\label{eq:traCorr}
\end{equation}

where $P_{U,W}$ is the optimal path of DTW. If $(i,j) \in P_{U,W}$, then $U$ at time $i$ is the most similar to $W$ at time $j$. When $-\sigma<\mathrm{s}(P_{U,W})<\sigma$, neither $U$ nor $W$ follows each other.  We have ${U \prec_{\sigma} W}$ if  $\mathrm{s}(P_{U,W}) \geq\sigma$. In contrast, $\mathrm{s}(P_{U,W}) \leq-\sigma$ implies ${W \prec_{\sigma} U}$. The function is bound by $\mathrm{s}(P_{U,W}) \in [-1,1]$ and we set $\sigma = 0.5$ for our framework as default.

Then, a following network $G = (V_\mathcal{Q},E)$ is constructed from $\mathcal{Q}$ where $v_k \in V_Q$ represents a vertex of time series $Q_k \in \mathcal{Q}$ and $E_{k,l}=|\mathrm{s}(P_{Q_k,Q_l})|$ if ${Q_l \prec_{\sigma} Q_k}$. The pseudo code of following network inference is in the supplementary material.

\subsection{Dynamic Following network inference.}
\label{subsec:Dy-following-network-inference}
As mentioned before, a set of time series $\mathcal{U}$ might consist of multiple overlapping coordination intervals from many factions. Using only summary statistics or an aggregate following network cannot discover these dynamics. Hence, we need to consider each local interval and build a following network to represent the interval. Therefore, we deploy a dynamic following network scheme in our framework, which is a common technique to deal with dynamics of  data\cite{holme2014temporal}.

The next question is ``how long should each local interval be?'' For now, we assume that we have a priori knowledge of the time window $\omega$ to capture local intervals. Later we show that  we can infer $\omega$ from the dataset itself in Section~\ref{subsec:TWinference}.  

We have a set of $t^*$-length time series $\mathcal{U}$ as the input.   We sample all time series within $\mathcal{U}$ by sliding window intervals and create following networks of these intervals. Let $\omega \in \mathbb{N}$ be a time window and $\delta = 0.1\omega$ (time shift threshold), the $i$-th sliding window interval, be defined by: $w(i) = [(i-1)\times\delta,(i-1)\times\delta+\omega]$. For each interval $w(i)$, we have a set of time series $\mathcal{Q}$.  For each time series $U_k \in \mathcal{U}$, there is $Q_{k} \in \mathcal{Q}$ such that $Q_k$ is a subset of $U_k$ during $w(i)$ time interval. We build a following network for each $w(i)$, then we combine these networks to be a single dynamic network. The pseudo code of the dynamic network creation is in the supplementary material.

\subsection{Factions detection and coordination intervals.}
\label{subsec:factions-detection}

For each following network $G=(V,E)$, factions are network components such that all member nodes directly connect to their initiator (Lemma~\ref{FracMemLem}). We infer factions based on Definition~\ref{def:sigmafactions} and the coordination intervals of factions are discovered based on Definition~\ref{def:rexfacinterval}. 

According to Lemma~\ref{FracMemLem}, initiator nodes have outgoing-degree zero, and all nodes within the similar faction directly connect to their initiator. However, due to the introduction of the time window $\omega$, some nodes might not have direct edges to the initiators. Therefore, we relax the constraint of faction membership to make all nodes which have any directed path to an initiator to be members of the initiator's faction. 

Since a faction is a directed connected component where all nodes are reachable from the initiator by inverse paths, we use Breadth-First Search (BFS) to identify all reachable nodes from each initiator node in the following network in order to find members of each faction.  
The pseudo code of this step is in the supplementary material.

A useful statistic about factions (used later) is the {\em faction size ratio}. Let $G_l=(F_l,E_l)$ be an induced subgraph of $G$ defined by faction $F_l$, then the faction size ratio of $F_l$ is defined as follows:

\begin{equation}
\text{fs}(F_l)= \frac{|E_l|}{ {|V| \choose 2}}.
\label{eq:FZratio}
\end{equation}

\subsection{Time window inference.}
\label{subsec:TWinference}

In reality, some following relations might not be cause by explicit initiators since they either happen by chance or are due to other factors which are not related to the influence of leaders. For instance, if a follower is unable to observe a leader's pattern, then the follower cannot be influenced by the leader. Different types of time series have different limitation of `observation memory', which is the limitation of time delay $\Delta t$ such that a follower can truly observe and imitate its leader's actions or can get commands from a leader.  

Hence, to represent the concept of observation memory limitation, we set the time window $\omega$ to limit the length of the time delay $\Delta t$ that  can measure following relations. Moreover, $\omega$ helps us prevent the comparison of time series between different coordination events. 

Nevertheless, if we set $\omega$ too small, we miss inferring some following relations that have $\Delta t>\omega$. On the contrary, long-length $\omega$ causes false positive matching between repeated patterns of different coordination intervals. Therefore, a proper $\omega^*$ should be able to infer a higher number of true following relations than any $\omega$. Even if some random following relations might appear when we choose $\omega$ instead of $\omega^*$, this is not an issue.  Since these random following relations appear by chance and with lower probability, they have a relatively small effect on the number of following relations.  

In our framework, without the knowledge of $\omega$, we use $\omega$ that maximizes the average coordination measure $\Psi$ (Eq.~\ref{eq:coor-measure}). 
Given a dynamic following network based on the time window $\omega$, for each time step $t$, we calculate $\Psi_t$ by designating each faction to be a cluster and creating the last cluster for all time series, which are not in any faction. Then, $\hat{\Psi}_{\omega}$ is computed from the median of $\{\Psi_1,...,\Psi_{t^*}\}$. $\hat{\Psi}_{\omega}$ is used to be a representative coordination measure value of $\omega$. Hence, the optimal $\omega^*$ is computed as follows:

\begin{equation}
\omega^*= \argmax_{\omega} (\hat{\Psi}_{\omega}).
\end{equation}

\subsection{Leadership comparison.}
\label{subsec:leadership-comp}
There are several methods that are widely used for ranking important nodes within the graph. One of the well-known methods that consider the higher-order relation within a graph is PageRank~\cite{page1999pagerank}. 
In our approach, we deploy PageRank on the following network to rank individuals within each faction and report the rank ordered lists for each time step. Even though PageRank scores are computed from the entire network, we compare individuals' ranking score only within the same faction and create a rank order list for each faction. For each node $i$ within a following network $G$, the PageRank score is defined below:

\begin{equation}
\pi_i=d\sum_{k \in \mathcal{N}^{in}_i} \frac{E_{k,i}\pi_k}{|\mathcal{N}^{out}_k|}+(1-d),
\label{eq:PReq}
\end{equation}

where $\pi_i \in [0,1]$ is a rank value of node $i$, $d$ is a damping factor, which is typically set at 0.9, $\mathcal{N}^{in}_i$ is a set of $i$'s followers, $\mathcal{N}^{out}_k$ is a set of individuals $k$ follow, and $E_{k,i} \in [0,1]$ is an element of adjacency matrix of a following network where $k$ follows $i$ if $E_{k,i}\geq \sigma$. 

\section{Evaluation Datasets}
\label{sec:experiment}

\subsection{Leadership models.}
\label{sec:LeadershipModels}
The evaluation of the framework is conducted based on four models of coordination mechanisms.
\subsubsection{Dictatorship Model.}
The Dictatorship Model (`DM')~\cite{flica} is considered to be the simplest model in the leadership realm. Initially, no movement happens until the leader starts moving to a target, then individuals follow their leader with some time delay until the entire group is coordinated in both direction and velocity. Then, the group gradually stops at the target and starts moving again to the next target.   

\subsubsection{Hierarchical Model.}
The Hierarchical Model (`HM')~\cite{flica} is another variation of DM with the hierarchical condition. The hierarchical condition assigns a rank to each individual within a group. A leader has the highest rank. The low-rank individuals follow high-rank individuals with some time delay. In our evaluation model, we assign a linear order hierarchical condition such that $\text{ID}(1)$ is a leader and $\text{ID}(n)$ is followed by $\text{ID}(n+1)$. The group moves linearly along the line with some noise, following its leader.
 
\subsubsection{Independent Cascade Model.}
The Independent Cascade Model (`IC')~\cite{kempe2003maximizing} is one of the influence propagation models in Social network analysis. Initially, everyone has a probability to be activated $\rho$. Active individuals move toward their leader. For each time step, each active individual simultaneously and independently attempts to activate $k$-nearest inactive neighbors around it  with the probability of success $\rho$. If success, the inactive individual becomes active at the next time step. Active individuals cannot attempt to activate the same individuals again. Only the leader follows its target and everyone else follows the leader. We explore the parameter space on combinations of: $k \in \{3,5,10\}$ and $\rho \in \{0.25,0.50,0.75\}$.

\subsubsection{Crowd Model.}
In the Crowd Model (CM)  \cite{Wu2014crowdmodel}, there are two types of individuals: informed and uninformed individuals. For each time step, informed individuals move toward the target independently while uninformed individuals keep staying close to both group's position and direction centroids. Therefore, the group direction is implicitly influenced by informed individuals. For each coordination, all informed individuals follow a single target direction vector, while the rest of the group keeps staying with the majority. 

\subsection{Synthetic trajectory simulation.}

We generate time series datasets based on the models described above. For each dataset, it consists of 30 individuals' time series of $X,Y$ coordinates. Each time series has a length of 4,000 time steps. A coordination event consists of multiple faction intervals, described below. We have five coordination events for each dataset. For each model above, the coordination event can be divided into two types.

\begin{figure}[ht!]
\centering
\includegraphics[width=1\columnwidth]{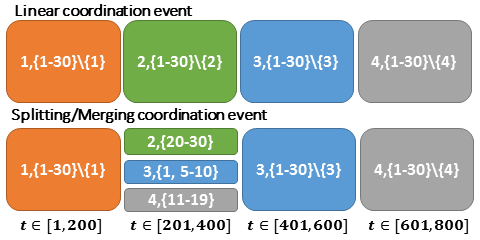}
\caption{Linear (above)  and  Splitting/Merging (below) coordination event. Each block represents a faction such that the first element is the leader ID and the second element is the member IDs set. The time interval each faction appears is at the last line.}
\label{fig:CoEvTypes}
\end{figure}

\subsubsection{Linear coordination event.}
\label{sec:LinearCo-event}
There are four factions for each coordination event. The first faction has $\text{ID}(1)$ as a leader and others are followers. This faction lasts for 200 time-steps. The next faction is lead by $\text{ID}(2)$ and its coordination interval is $[201,400]$. The third faction appears within $[401,600]$ interval and it has $\text{ID}(3)$ as a leader. In the last faction, $\text{ID}(4)$ leads the group to stop moving and the group completely stops moving around time step $t=700$. Everyone stops moving within $[700,800]$, then the group proceeds to the next coordination event again. 

\subsubsection{Splitting/Merging coordination event.}
\label{sec:MSCo-event}
In this type of coordination event, splitting and merging of factions happens. Within the $[1,200]$ interval, $\text{ID}(1)$ leads a single faction with its direction vector. Then, at $t=201$, the group is split into three factions and they appear within $[201,400]$ interval. The first faction is lead by $\text{ID}(2)$ and about a third of the previous faction members are followers (Fig~\ref{fig:CoEvTypes} below). The $\text{ID}(2)$ has its own direction vector.  $\text{ID}(3)$ leads the second faction with another one third members from the previous faction. $\text{ID}(3)$ has a different direction from $\text{ID}(2)$. Lastly, $\text{ID}(4)$ leads the rest of the individuals. $\text{ID}(4)$ also has its own direction, which is different from $\text{ID}(2)$'s and $\text{ID}(3)$'s. 

At $t=401$, the factions lead by $\text{ID}(2)$ and $\text{ID}(4)$ are merged into the faction of $\text{ID}(3)$; $\text{ID}(2)$ and $\text{ID}(4)$ follow the $\text{ID}(3)$'s direction. At the $[401,600]$ interval, $\text{ID}(3)$ leads all the individuals. Finally, $\text{ID}(4)$ leads the faction to stop moving between $t=601$ and $t=700$. The group completely stops at the $[701,800]$ interval. Note that leaders in each faction are informed individuals in the Crowd Model. Instead of having only one leader for each faction, we have three informed individuals in the Crowd Model.\\

For each leadership model and its coordination event type, we generated 100 datasets. In total, each model has 200 datasets except IC, for which we explore all nine possible combinations of parameters. In total, we have 1,800 datasets for the IC model. 

\subsection{Biological datasets}
\subsubsection{Baboon trajectories.}
\label{subsubsec:baboon}

The dataset comes from the set of GPS collars on a troop of wild olive baboons (\emph{Papio anubis}) at Mpala Research Centre, Kenya \cite{crofoot2015data,strandburg2015shared}. The data consists of time series of latitude-longitude location pairs for each individual every second. $16$ individuals whose collars remained functional throughout the time are analyzed for a case study of a merging coordination event.

\subsubsection{Fish schools trajectories.}
\label{subsubsec:fish}

The fish dataset is a set of time series of fish positions from a video record of a school of golden shiners (\emph{Notemigonus crysoleucas}).  The record is used to study information propagation over the visual fields of fish \cite{strandburg2013visual}. Each trial contains  $70$ fish, with $10$ fish who trained to lead the group to the feeding sites. The dataset has $24$ separate ground truthed leadership events. The task is to correctly identify trained fish.

\section{Evaluation criteria}
\label{sec:Eval}
For each simulation dataset, we have the ground truth of an individual's membership in a faction and the identity of the faction's leader. We compared the inference result from each method against the known ground truth to evaluate the method's performance.
\subsection{Individual assignment.}
For all models, for each time step, the accuracy of the individual assignment is the number of  inferred individuals' factions that agree with the ground truth, divided by the total number of individuals. Note that, in the Crowd Model, each faction $F$ has a set of informed individuals and individuals belong to $F$ if they follow any informed individual in $F$.   
\subsection{Leadership prediction.}
For all models except the Crowd Model, the true positive $\text{TP}$ is the number of inferred leaders who are indeed the ground truth leaders. The false positive $\text{FP}$ is the number of inferred leaders who are not the actual leaders. The false negative $\text{FN}$ is the number of actual leaders who are inferred to be non-leaders. In the Crowd Model, $\text{TP}$ is the number of inferred leaders who are informed individuals from the right faction. $\text{FP}$ is the number of leaders who are uninformed individuals. $\text{FN}$ is the number of ground truth factions such that all informed members are non-leaders.  We calculated F1-Score to estimate the performance of the leadership prediction for each framework. 
\section{Results}
\subsection{Leadership Identification.}
\begin{table}
\caption{Factions and Leaders identification on simulation models}
\label{SimPredictTB}
\begin{tabular}{ |c|c|c|c|c|   }
 \hline
\multicolumn{1}{|c|}{} & \multicolumn{2}{c|}{Leadership F1-score} & \multicolumn{2}{c|}{Assignment Acc.} \\
\cline{2-5}
 Dataset& mFLICA & FLOCK  & mFLICA & FLOCK\\
 \hline
DM-L	&\bf{0.94}	&0.92	&\bf{0.89}	&0.86 \\ \hline
DM-MS	&\bf{0.94}	&0.91	&\bf{0.86}	&0.84 \\ \hline
HM-L	&\bf{0.94}	&0.91	&\bf{0.94}	&0.86 \\ \hline
HM-MS	&\bf{0.95}	&0.90	&\bf{0.86}	&0.81 \\ \hline
IC-L	&\bf{0.91}	&0.86	&\bf{0.86}	&0.80 \\ \hline
IC-MS	&\bf{0.89}	&0.85	&\bf{0.79}	&\bf{0.79} \\ \hline
CM-L	&\bf{0.82}	&0.64	&\bf{0.83}	&0.64 \\ \hline
CM-MS	&\bf{0.75}	&0.67	&\bf{0.64}	&0.55 \\ \hline
\end{tabular}
\end{table}

For each simulation model in Section~\ref{sec:LeadershipModels}, we evaluated results from all datasets using the criteria in Section~\ref{sec:Eval}. We set $\omega$ time window by the method from Section~\ref{subsec:TWinference} and set time shift $\delta=0.1\omega$. The results of faction assignments and leaders identification are in Table~\ref{SimPredictTB}. Each row with the label `-L' is a model with Linear coordination event type (Section~\ref{sec:LinearCo-event}) and `-MS' represents a model with Splitting/Merging coordination event type (Section~\ref{sec:MSCo-event}). The 2nd and 3rd columns represent the results of leadership prediction F1-Scores of mFLICA (our proposed framework) and the modified FLOCK framework~\cite{will2016flock,andersson2008reporting}, and the values in these columns are calculated from the median of all datasets from a given leadership model. The 4nd and 5rd columns represent individual assignment accuracy results. We took the median of all given-model datasets to represent each model accuracy. Unsurprisingly, mFLICA beat FLOCK in all models. The result implies that the simple framework like FLOCK has a limitation when it needs to deal with complicated noisy leadership models. \\

\begin{table}
\caption{Rank orders median accuracy within factions}
\label{HircRankTB}
\begin{center}
\begin{tabular}{ |c|c|c|   }
 \hline
\multicolumn{1}{|c|}{} & \multicolumn{2}{c|}{Top3 Rank Order Accuracy}  \\
\cline{2-3}
\multicolumn{1}{|c|}{Dataset} & \multicolumn{1}{c|}{mFLICA} & \multicolumn{1}{c|}{FLOCK} \\
\hline
HM-L	 &0.75	&0.78 \\  \hline
HM-MS	 &0.72	&0.76\\  \hline
\end{tabular}
\end{center}
\end{table}

 In hierarchical models, we reported the result of top 3 rank order inference accuracy within each faction in Table~\ref{HircRankTB}. The table rows represent leadership model datasets. The columns are accuracy, which determined by the percentage of top-$3$ individuals from the ground truth appear in the list of top-$3$ inferred list. Even though mFLICA has a competitive results, the FLOCK framework performs better, which makes sense since the hierarchical model has a linear hierarchy structure and the leader is always in the front of the group's direction, which matches the fundamental assumption of FLOCK.

\subsection{Case study: trained leaders in fish schools. }
\begin{table}
\caption{A school of fish inference median accuracy over 24 trails}
\label{FishTB}
\begin{center}
\begin{tabular}{ |c|c|c|  }
 \hline
   &  Trained fish  &Trained fish   \\
 Method &   factions & leaders  \\
 \hline
mFLICA	&0.90 &0.88 \\  \hline
 FLOCK	&0.37 &0.27	\\  \hline
\end{tabular}
\end{center}
\end{table}

We considered any fish within the faction of a trained fish to be following the trained fish. Among 24 trails of fish movement, the medians of inference accuracy of a fish following the trained fish are in column 2 in Table~\ref{FishTB}.	We also measured the accuracy of inferred initiators being the trained fish in each trial (column 3 in Table~\ref{FishTB}). According to the results in Table~\ref{FishTB}, mFLICA performs significantly better than FLOCK in both aspects. This is because the fish datasets are tremendously noisy, and the DTW in mFLICA is more robust to the noise than the simple FLOCK model~\cite{kjargaard2013time}. 

\subsection{Case study: detecting the group merging event of baboons. }
\begin{figure}[h!]
\centering     
\subfigure{\label{fig:baboon-exbottom}\includegraphics[width=1\columnwidth]{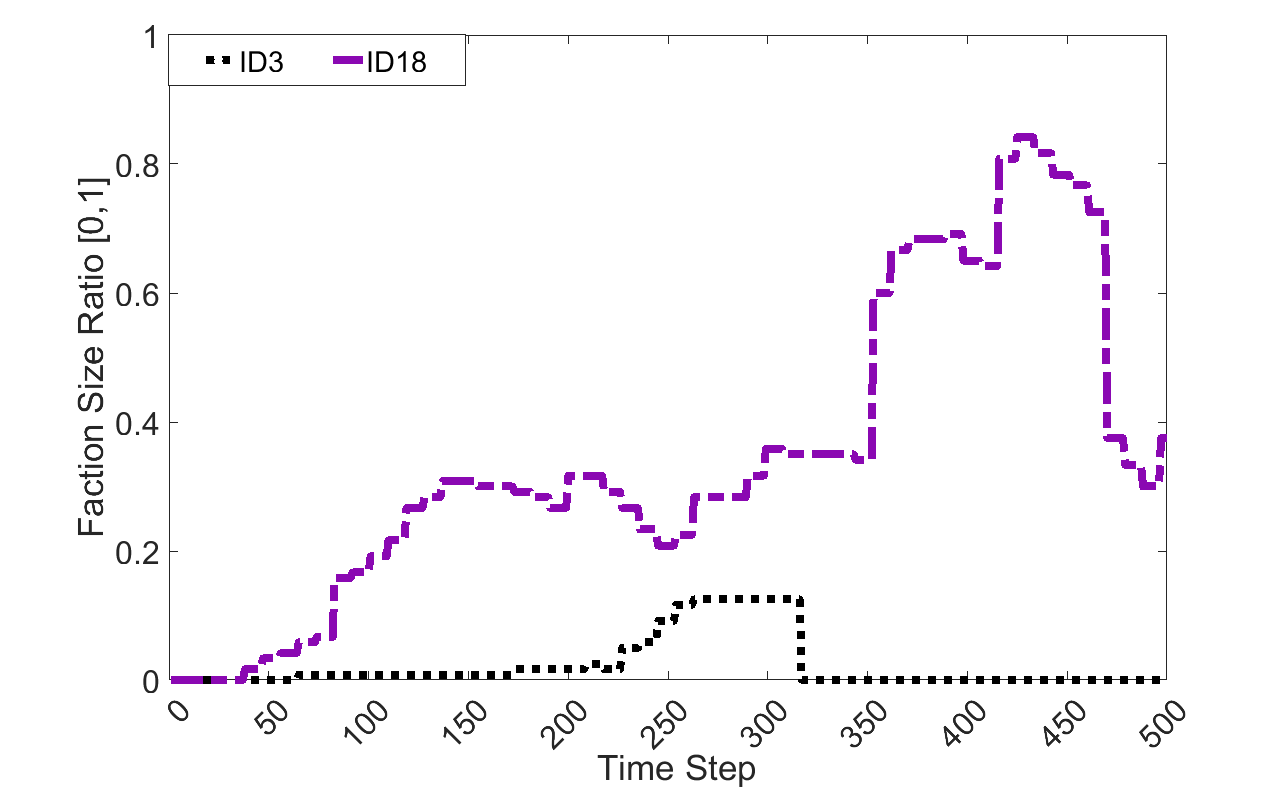}}
\subfigure[$t=300$]{\label{fig:baboon-exa}\includegraphics[width=0.32\columnwidth]{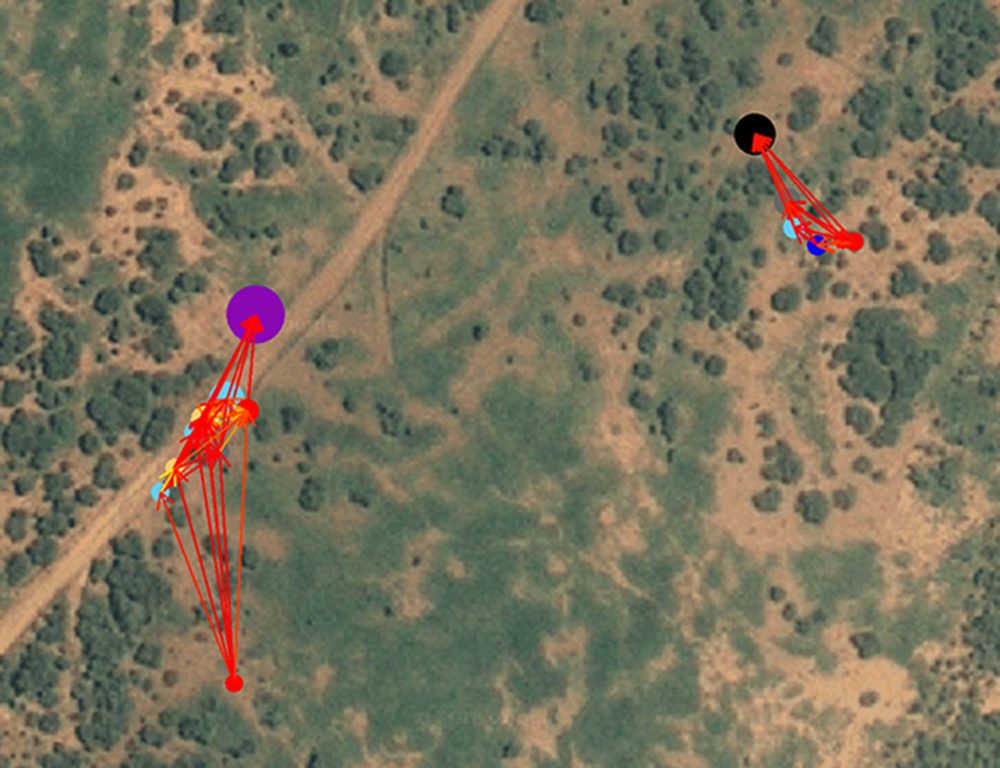}}
\subfigure[$t=350$]{\label{fig:baboon-exb}\includegraphics[width=0.32\columnwidth]{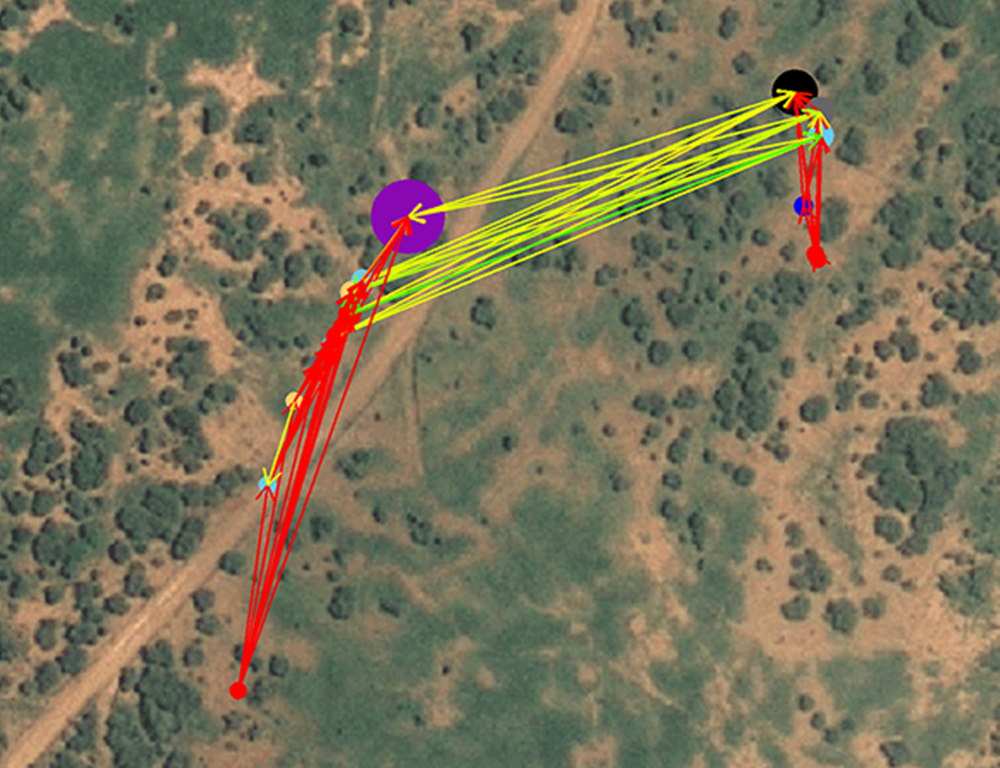}}
\subfigure[$t=400$]{\label{fig:baboon-exc}\includegraphics[width=0.32\columnwidth]{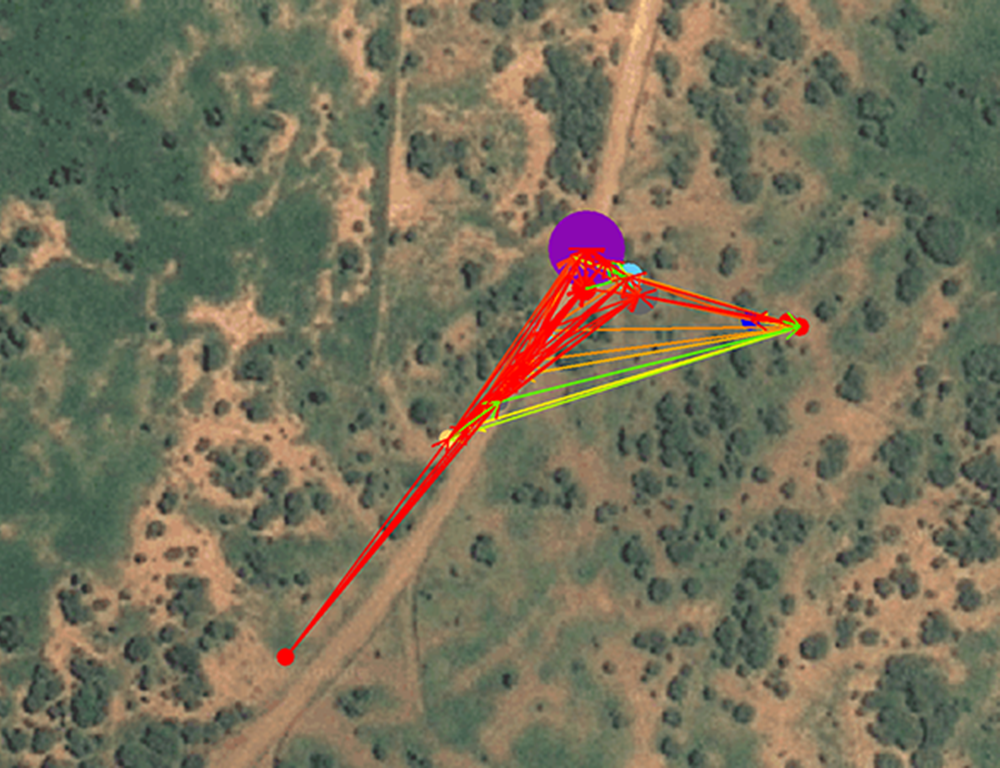}}
\caption{ The merging coordination event. (Top) Faction size ratios (Eq.~\ref{eq:FZratio}) of $\text{ID}(3)$ and $\text{ID}(18)$ factions. (Bottom) The GPS locations of individuals in the map over three different time steps ($t=300,350,400$), with the `following' network, and PageRank indicated by node size. $\text{ID}(3)$ is black and $\text{ID}(18)$ is purple. The red edges have higher edge weights than the light edges.}
\label{fig:BaboonMerging}
\end{figure}

We used a baboon dataset (see Section~\ref{subsubsec:baboon}) to demonstrate an example application of our framework to find transitions of coordinated events in real datasets. We focused on the dataset during the period when the merging of two groups happens on Aug 3, 2012, 08:49:01 AM. The length of the trajectories is 500 seconds. Figure~\ref{fig:BaboonMerging} illustrates the result when the merging happens. Before time $t=300$, a faction lead by $\text{ID}(3)$ (black node) starts moving in the same direction as the faction lead by $\text{ID}(18)$ (purple node). The process is measured by the Faction size ratios (Eq.~\ref{eq:FZratio}) of both factions, which increase over time. After $t=300$, $\text{ID}(3)$ faction is merging with $\text{ID}(18)$'s faction to become a single faction at $t=400$. After merging, because the faction of $\text{ID}(18)$ gains more members, its Faction size ratio (Eq.~\ref{eq:FZratio}) increases. Hence, by observing Faction size ratios lead by each individual, we can find merging events (or spiting events).

\section{Discussion}
In this paper, we formalized the \flip and provided an end-to-end general, unsupervised framework as the novel solution that can be used to study a wide range of coordinated activities.
The framework is competitive against a non-trivial baseline method in both simulated and real-world datasets. Moreover, we demonstrated that the framework can be used to identify merging events as well as factions and initiators at each time step in biological datasets. This example implies that our framework opens opportunities for scientists to ask questions about coordinated activities and is able to create scientific hypotheses and test them. Our framework is powerful and almost parameter free (we need only a similarity threshold $\sigma$ and time shift $\delta$ parameter). The scalability bottleneck is the DTW method used to compare time series. The existing DTW lower/upper bound techniques cannot be applied directly in our case since they only compute the distance between time series and not the actual wrapping path needed in our framework. With simpler and faster similarity computation, our framework can become highly computationally scalable. In the future, such more scalable approaches should be investigated. Another future work we plan to explore is the causality inference, which is closely related to leadership inference in the sense that initiators cause their followers' actions. We are  planning to report the Granger causality results for leadership inference in our next paper.  The code, datasets, and supplementary files that we used in this paper can be found at \cite{ShareSourcecode}. The new mFLICA code is in the form of R package~\cite{amornbunchornvej2020mflica}.

\balance
\bibliographystyle{abbrv}

\balance

\newpage
\section*{Supplementary}
\subsection*{Coordination measure}
Given a set of time series $\mathcal{U}$, a set of clusters $\mathcal{C}=\{H_1,\dots,H_n\}$ such that $\bigcup_k H_k=\mathcal{U}$,  we define a cluster membership indicator $\delta_{i,j} =1$ if time series $U_i$ and $U_j$ belong to the similar cluster, otherwise it is zero.  {\bf The average coordination measure} $\Psi$ of a set of clusters $\mathcal{C}$ is defined as follows:

\begin{equation*}
\Psi(\mathcal{C}) = \frac{\displaystyle\sum_{ U_i,U_j \in \mathcal{U},U_i \neq U_j }\mathrm{sim}_{max}(U_i,U_j)\delta_{i,j}}{\displaystyle\sum_{ U_i,U_j \in \mathcal{U},U_i \neq U_j }\delta_{i,j}}.
\end{equation*}

Note that $\Psi \in [0,1]$. If $\Psi$ is close to one, then all the time series within the same cluster are highly similar, with some time delay. This implies there exists a high degree of coordination within each clusters in this case. On the contrary, $\Psi \approx 0$ implies no coordination, on average.

\begin{theorem}
Given a set of time series $\mathcal{U}$ containing a set of $\sigma$-faction $\mathcal{F}=\{F_1,\dots,F_n\}$ where $\bigcup_{F_i \in \mathcal{F}}F_i=\mathcal{U}$, then, for all possible sets of clusters, $\mathcal{F}$ maximizes the average coordination measure $\Psi$. 
\end{theorem}
\begin{proof}
Reminding that for all pairs $U_i, U_j$ within any similar faction $F$, $\mathrm{sim}_{max}(U_i,U_j) \geq \sigma$. Hence, $\Psi(\mathcal{F})\geq \sigma$.\\

Case 1: let $H,J \in \mathcal{F}$, if we modify $\mathcal{F}$ by exchanging any time series $U_H \in H$ with $U_J \in J$ and call it $\mathcal{C}$, then we have:
 
\begin{equation*}
\Psi(\mathcal{F}) - \Psi(\mathcal{C}) = \frac{ S + S' }{\sum_{ U_i,U_j \in \mathcal{U},U_i \neq U_j }\delta_{i,j}}.
\end{equation*}

\begin{eqnarray*}
S = \sum_{U_i\in H\setminus\{U_H\} }\Big(\mathrm{sim}_{max}(U_i,U_H) - \mathrm{sim}_{max}(U_i,U_J) \Big) \\
S' = \sum_{ U_i \in J\setminus\{U_J\} }\Big(\mathrm{sim}_{max}(U_i,U_J)-\mathrm{sim}_{max}(U_i,U_H)\Big) \\																																				 
\end{eqnarray*}

For any $U_i \in H$, $\mathrm{sim}_{max}(U_i,U_H) \geq \sigma$ since $U_H \in H$. In contrast, because $U_J \notin H$, then $\mathrm{sim}_{max}(U_i,U_J) < \sigma$, which implies $S>0$. $S'>0$ for a similar reason.  Therefore, $\Psi(\mathcal{F}) - \Psi(\mathcal{C}) >0$.

Case 2: if we create $\mathcal{C}$ from  $\mathcal{F}$ by spiting a cluster $H \in \mathcal{F}$ to be $H_1 \subset H$ and $H_2 = H\setminus{H_1}$, then we have:

\begin{equation*}
\Psi(\mathcal{F}) - \Psi(\mathcal{C}) = \frac{ \sum_{U_i \in H_1,U_j \in H_2}\mathrm{sim}_{max}(U_i,U_j) }{|H_1||H_2|} \geq \sigma.
\end{equation*}

Case 3: we create $\mathcal{C}$ from $\mathcal{F}$ by merging any cluster $H \in \mathcal{F}$ with any $J \in \mathcal{F}$ such that $H \neq J$ to be $H'$. So, let 

\begin{equation*}
\Psi(\mathcal{F}) = \frac{X_{\mathcal{F}}}{S_{\mathcal{F}}} \geq \sigma, 
\end{equation*}
then 
\begin{equation*}
\Psi(\mathcal{C}) = \frac{X_{\mathcal{F}}+ \sum_{U_i \in H,U_j \in J}\mathrm{sim}_{max}(U_i,U_j) }{S_{\mathcal{F}}+|H||J|}.
\end{equation*}

By merging $H$ and $J$, we introduce pairs of time series across $H$ and $J$ to Equation~\ref{eq:coor-measure} such that $\mathrm{sim}_{max}(U_i,U_j) < \sigma$ since these pairs are not belong to the same faction. These pairs decrease the average of $X_{\mathcal{F}}$, which implies $\Psi(\mathcal{F}) > \Psi(\mathcal{C})$.

Since we shown that no matter how we edit $\mathcal{F}$, the average coordination measure $\Psi$ cannot increase, therefore, $\mathcal{F}$ maximizes the average coordination measure. 
\end{proof}

\subsection*{Time complexity}
Let $n$ be a number of time series, $\omega$ be a time window, $\delta$ be a shifting factor (we use $\delta=0.1\omega$), and $t^*$ be a total length of time series. By deploying DTW Sakoe Chiba band technique~\cite{sakoe1978dynamic} setting $\delta$ as a band limitation, the time complexity of computing a following network is $\mathcal{O}(n^2 \times \omega \times \delta)$. Since we need warping paths, not a distance, the upper/lower bounds tricks which are used to speed up DTW found in the time series literature cannot be applied here. The number of following networks we need to compute is $\frac{t^*}{\delta}$. In total, the time complexity of our framework is $\mathcal{O}(n^2 \times \omega \times t^*)$. Additionally, we might explore $k$ candidates of $\omega$ in order to find the optimal $\omega$. Since $k$ is a constant, the asymptotic time complexity of our framework also remains the same. This expensive cost is unavoidable and it makes our framework hard to be a scalable framework.  

\newpage
\subsection*{Comparison method}
From the best of our knowledge, there is no existing methods dealing with the \flip. The closest method that we can compare against is the flock model~\cite{will2016flock,andersson2008reporting}. We compared our framework against Volatility Collective Behaviors Model~\cite{will2016flock}, which has an assumption that all members in a similar group move toward the similar direction on a non-linear trajectory.  Hence, we modified the FLOCK framework to make it work in our setting as a baseline of comparison. In stead of using DTW to build following networks, we created FLOCK following networks. According to the work in \cite{andersson2008reporting}, the time series $A$ follows the time series $B$ at any time step $t$ if the angle of their direction vector from time $t-1$ to $t$ is less than the threshold $\beta$ as well as $B$ is in the front of $A$ with respect to $B$'s direction, as well as $A$ and $B$ must have their distance less than the threshold $\gamma$. The FLOCK following networks are built for all time steps. The rest of FLOCK framework is similar to our framework. We set the FLOCK parameters such that it can perform the best.

\subsection*{Centrality measures in multi-faction datasets}
In this section, we explore the use of centrality measures to infer faction initiators. We used 200 simulated datasets from the dictatorship model to conduct the analysis. For each dataset, we created a global static following network and used centrality measures on this network. In each dataset, we have 30 individuals and four of them are initiators. 

\begin{table}[th]
\centering
\caption{Jaccard similarity between top-4 ranking individuals from centrality measures and the ground truth set of four initiators in dictatorship model from 200 datasets.}
\label{table:JCsim}
\begin{tabular}{c|c|c|c|}
\cline{2-4}
                                        & \multicolumn{3}{c|}{Centrality methods} \\ \hline
\multicolumn{1}{|c|}{Event types}       & PageRank    & IN-Degree   & Closeness   \\ \hline
\multicolumn{1}{|c|}{Linear}            & 0.85        & 0.84        & 0.54        \\ \hline
\multicolumn{1}{|c|}{Merge/split} & 0.64        & 0.67        & 0.53        \\ \hline
\end{tabular}
\end{table}

The Jaccard similarity result between the top-4 ranking individuals from the centrality measures and the ground truth set of four initiators is in Table~\ref{table:JCsim}. PageRank and IN-Degree centrality perform well in dataset containing the simple linear coordination events while closeness centrality performs the worst. This is because initiators in this setting are supposed to have a higher number of followers than non-initiator individuals, which implies the higher ranking w.r.t. PageRank and In-Degree centrality. In contrast, initiators are not necessary close to their followers in the network, which made closeness centrality perform poorly. For the datasets that contain merge/split-coordination events, since there is a complicated dynamics of interactions among the factions, the simple centrality measures fail to capture the true initiators altogether.

\begin{table}[th]
\centering
\caption{Support of four initiators being in the list of top-4 ranking individuals from centrality measures in 100 datasets containing linear coordination events.}
\label{table:supLinear}
\begin{tabular}{c|c|c|c|}
\cline{2-4}
                                     & \multicolumn{3}{c|}{Centrality methods} \\ \hline
\multicolumn{1}{|c|}{Initiator's ID} & PageRank    & IN-Degree   & Closeness   \\ \hline
\multicolumn{1}{|c|}{ID1}            & 1           & 1           & 0.92        \\ \hline
\multicolumn{1}{|c|}{ID2}            & 1           & 1           & 0.66        \\ \hline
\multicolumn{1}{|c|}{ID3}            & 1           & 1           & 0.36        \\ \hline
\multicolumn{1}{|c|}{ID4}            & 0.39        & 0.37        & 0.20        \\ \hline
\end{tabular}
\end{table}

Table~\ref{table:supLinear} illustrates the result of supports of four initiators being in the list of individuals ranked top-4 by the centrality measures in linear-coordination datasets. Similarly, PageRank and In-Degree centrality perform well, while closeness centrality perform poorly. For the merge/split-coordination datasets,  Table~\ref{table:MS} shows that all centrality measures perform poorly to infer ID2 and ID4 initiators while they perform well to include ID1 and ID3 in their top-4 ranking lists. This is because ID1 and ID3 spent significantly more time leading their factions than ID2 and ID4.

\begin{table}[bh]
\centering
\caption{Support of four initiators being in the list of top-4 ranking individuals from centrality measures in 100 datasets containing merging/splitting coordination events.}
\label{table:MS}
\begin{tabular}{c|c|c|c|}
\cline{2-4}
                                     & \multicolumn{3}{c|}{Centrality methods} \\ \hline
\multicolumn{1}{|c|}{Initiator's ID} & PageRank    & IN-Degree   & Closeness   \\ \hline
\multicolumn{1}{|c|}{ID1}            & 1           & 1           & 1           \\ \hline
\multicolumn{1}{|c|}{ID2}            & 0.29        & 0.47        & 0.20        \\ \hline
\multicolumn{1}{|c|}{ID3}            & 1           & 1           & 0.83        \\ \hline
\multicolumn{1}{|c|}{ID4}            & 0.28        & 0.19        & 0.08        \\ \hline
\end{tabular}
\end{table}

In conclusion, these results emphasize the need of  a dynamic following network approach to deal with the complicated problem  of inferring the initiator of a faction.  

\newpage
\subsection*{The  pseudo  codes}
\label{sec:pseudo codes}

\IncMargin{1em}
\begin{algorithm2e}
\label{algo:CreateFollowingNetwork}
\caption{CreateFollowingNetwork}
\SetKwInOut{Input}{input}\SetKwInOut{Output}{output}
\Input{A time series set $\mathcal{Q}=\{Q_1,\dots,Q_n \}$ and a threshold $\sigma$}
\Output{A $n\times n$ adjacency matrix $E$ }
\BlankLine
$E_{i,j} = 0, \forall i,j \in \{1,\dots,n\}$\;
\For{$i\leftarrow 1$ \KwTo $n$}{
\For{$j\leftarrow i+1$ \KwTo $n$}{\label{forins}
$U\leftarrow Q_i$ and $W \leftarrow Q_j$\; 
$P_{U,W}\leftarrow DTW(U,W)$ \;  
\uIf{$\mathrm{s}(P_{U,W}) \geq \sigma $}{
$E_{j,i}=|\mathrm{s}(P_{U,W})|$\; 
}
\uElseIf{$\mathrm{s}(P_{U,W}) \leq -\sigma $}{
$E_{i,j}=|\mathrm{s}(P_{U,W})|$\; 
}
\Else{
$E_{i,j} = 0$\; 
}

}
}
\end{algorithm2e}\DecMargin{1em}

\IncMargin{1em}
\begin{algorithm2e}
\caption{CreateDyFollowingNetwork}
\SetKwInOut{Input}{input}\SetKwInOut{Output}{output}
\Input{A time series set $\mathcal{U}$, $\omega$, $\delta$, and $\sigma$}
\Output{A $n\times n \times t^*$ adjacency matrix $E^*$.}
\BlankLine
$K \leftarrow (t^* - \omega)/ \delta$ \;
\For{$i\leftarrow 1$ \KwTo $K$}{

\textcolor{cyan}{\tcc*[h]{current time interval} }
$w(i)=[(i-1)\times \delta,(i-1)\times \delta + \omega]$ \;
 \textcolor{cyan}{\tcc*[h]{SubTimeSeries($U,w(i)$) returns all sub time series in $U$ within the interval $w(i)$}}
$Q \leftarrow$SubTimeSeries($U,w(i)$)\;
$E\leftarrow$CreateFollowingNetwork($Q,\sigma$) \;

 \textcolor{cyan}{\tcc*[h]{Set all edges within the time interval $[(i-1)\times\delta,i\times\delta]$ to be similar}}

$E^*_{ t \in [(i-1)\times\delta,i\times\delta]}\leftarrow E$ \;
}
$Q \leftarrow$SubTimeSeries($U,[K\times\delta,t^*]$)\;
$E\leftarrow$CreateFollowingNetwork($Q,\sigma$) \;
$E^*_{ t \in [K\times\delta,t^*]}\leftarrow E$ \;
\label{algo:CreateDyFollowingNetwork3}
\end{algorithm2e}\DecMargin{1em}

\IncMargin{1em}
\begin{algorithm2e}
\caption{FindFactionsAndInitiators}
\SetKwInOut{Input}{input}\SetKwInOut{Output}{output}
\Input{An adjacency matrix $E^*$ of dynamic network}
\Output{ A time series of faction sets $\mathcal{F}^*$, and a time series of initiator sets $\mathcal{L}^*$ }
\BlankLine
\For{$i\leftarrow 1$ \KwTo $t^*$}{
\textcolor{cyan}{\tcc*[h]{Get a matrix at time $t=i$} }

$E \leftarrow E^*_{ t = i}$ \; 
\textcolor{cyan}{\tcc*[h]{FindInitiators($E $) returns all nodes which have zero outgoing degree}}

$\mathcal{L} \leftarrow$FindInitiators($E $) \;
$\mathcal{F} = \emptyset$ \;
\For{$l \in \mathcal{L}$}{
\textcolor{cyan}{\tcc*[h]{FindReachNodeFrom($E,l$) returns all nodes which have any directed path to $l$}}

$F_l\leftarrow$FindReachNodeFrom($E,l$) \;
$\mathcal{F} = \mathcal{F} \cup \{F_l\}$
}
$\mathcal{F}^*_{t=i} = \mathcal{F}  $ and $\mathcal{L}^*_{t=i} = \mathcal{L}  $
}
\label{algo:FindFactionsAndInitiators}
\end{algorithm2e}\DecMargin{1em}

\end{document}